\newcommand{\nwfs}{{\sc{No\hyp Wait\hyp Flowshop}}\xspace}
\newcommand{\atsp}{{\sc{ATSP}}\xspace}
\newcommand{\atspp}{{\sc{ATSPP}}\xspace}
\newcommand{\e}{\varepsilon}
\newtheorem{observation}[theorem]{Observation}
\newcommand{\eps}{\varepsilon}
\newcommand{\np}{${\cal N\cal P}$}
\newcommand{\nnpp}{${\cal P}={\cal N\cal P}$}
\newcommand{\cmax}{C_{\max}}
\title{No-Wait Flowshop Scheduling is as Hard as Asymmetric Traveling Salesman Problem}
\author{Marcin Mucha\inst{1}\thanks{Part of this work was done while visiting DIMAP at the University of Warwick. Work supported by ERC StG project PAAl no.\ 259515.} \and Maxim Sviridenko\inst{2}\thanks{Work supported by EPSRC grants EP/J021814/1, EP/D063191/1, FP7 Marie Curie Career Integration Grant and Royal Society Wolfson Research Merit Award. }}
\institute{University of Warsaw\\
\email{mucha@mimuw.edu.pl}
\and
University of Warwick\\
\email{M.I.Sviridenko@warwick.ac.uk}
}
\date{}
\begin{document}

\maketitle

\begin{abstract}
In this paper we study the classical no-wait flowshop scheduling problem with makespan objective ($F|no-wait|C_{max}$ in the standard three-field notation). This problem is well-known to be a special case of the asymmetric traveling salesman problem (\atsp) and as such has an approximation algorithm with logarithmic  performance guarantee. In this work we show a reverse connection, we show that any polynomial time $\alpha$-approximation algorithm for the  no-wait flowshop scheduling problem with makespan objective implies the existence of a polynomial-time $\alpha(1+\varepsilon)$-approximation algorithm for the \atsp, for any $\varepsilon>0$. This in turn implies that all non-approximability results for the \atsp (current or future) will carry over to its special case. In particular, it follows that no-wait flowshop problem is APX-hard, which is the first non-approximability result for this problem.
\end{abstract}

\section{Introduction} \label{sec:intro}
\subsection{Problem statement and connection with \atsp}
A {\em flowshop\/} is a multi-stage production process with the
property that all jobs have to pass through several stages. There
are $n$ jobs $J_j$, with $j=1,\ldots,n$, where each job $J_j$ is a
chain of $m$ operations $O_{j1},\ldots,O_{jm}$. Every
operation $O_{ji}$ is preassigned to the  machine $M_i$. The operation $O_{ji}$
has to be processed for $t_{ji}$ time units at its stage; the
value $t_{ji}$ is called its {\em processing time\/} or its {\em
length}.  In a
feasible schedule for the $n$ jobs, at any moment in time every
job is processed by at most one machine and every machine executes
at most one job. For each job $J_j$, operation $O_{ji-1}$ always
is processed before operation $O_{ji}$, and each operation is
processed without interruption on the machine to which it was
assigned.

In the \emph{no-wait flowshop problem} (\nwfs) we require two
additional constraints to be satisfied:
\begin{itemize}
\item There is no waiting time allowed between the execution of consecutive
operations of the same job. Once a job has been started, it has to
be processed without interruption, operation by operation, until
it is completed.
\item Each machine processes the jobs in the same order (i.e.\ we only allow so-called \emph{permutation schedules}).
Note that this applies to all jobs, not only those with non-zero processing time on a given machine.
In other words, one should treat zero length operations as having an infinitely small, but non-zero length.
\end{itemize}

Our goal is to find a permutation $\sigma$ of jobs that
minimizes the {\em makespan\/} (or {\em length\/}) $\cmax(\sigma)$ of the
schedule, i.e., the maximum completion time among all jobs. The
minimum makespan among all feasible schedules is denoted by
$\cmax^*$.

Consider two jobs $J_i$ and $J_j$ that are processed one after another in a no-wait permutation schedule. Let $\delta(i,j)$ be the minimum time we need to wait to start the job $J_j$ after starting the job $J_i$. What is the value of $\delta(i,j)$? Clearly we need to wait at least $t_{i1}$. But since job $j$ cannot wait on the second machine, we also need to wait at least $t_{i1}+t_{i2}-t_{j1}$. Generalizing this leads to the following expression
\begin{equation}
\label{eq:delta-definition}
\delta(i,j) = \max_{q=1,\ldots,m} \left\{ \sum_{k=1}^q t_{ik} - \sum_{k=1}^{q-1} t_{jk}\right\}.
\end{equation}
Note that $\delta$ satisfies the triangle inequality, i.e.\ $\delta(i,j) \le \delta(i,k)+\delta(k,j)$ for any jobs $J_i,J_j,J_k$. The easiest way to see this is by considering the 'waiting time' intuition that led to its definition.

Let $L(j)=\sum_{k=1}^m t_{jk}$ be the total processing time (or length) of   job $J_j$. Then a no-wait schedule that processes the jobs in order $\sigma$ has makespan
\begin{equation}
\label{eq:nwfs-value}
 \cmax(\sigma)=\sum_{k=1}^{n-1} \delta(\sigma_k,\sigma_{k+1}) + L(\sigma_n).
\end{equation}

In the asymmetric traveling salesman problem (\atsp), we are given a complete directed graph $G=(V,E)$ with arc weights $d(u,v)$ for each $u,v\in V$. It is usually assumed that the arc weights satisfy the semimetric properties, i.e. $d(u,u)=0$ for all $u\in V$ and $d(u,v)\le d(u,w)+d(w,v)$ for all $u,w,v\in V$. The goal is to find a Hamiltonian cycle, i.e. a cycle that visits every vertex exactly once, of minimum total weight. The asymmetric traveling salesman path problem (\atspp) is defined analogously, the only difference is that we are looking for a path that starts and ends in arbitrary but distinct vertices and visits all other vertices exactly once along the way.   The distance function $\delta$ can be used to cast \nwfs as \atsp by introducing a dummy job consisting of $m$ zero length operations, and modifying $\delta$ slightly by setting $\delta(i,i)=0$ for all $i=1,\ldots,n$. The role of the dummy job is to emulate the $L(\sigma_n)$ term in~\eqref{eq:nwfs-value}.

The \nwfs was first defined in the 1960 by Piehler \cite{Pi60} who also noticed this problem is a special case of the \atsp. This connection was also later noticed by Wismer \cite{Wismer72}.  The \nwfs is usually denoted $F|no-wait|C_{max}$ using the standard three-field scheduling notation (see e.g.\ Lawler et al.~\cite{LaLeRiSh93}). Although no-wait shop scheduling problems arise naturally in many real-life scenarios (steel manufacturing, hot potato routing) they sometime behave in a way uncommon for other scheduling problems, e.g. speeding up a machine may actually increase the makespan \cite{SW}.

\subsection{Known Results}

For the \nwfs with two machines, the distance matrix of the corresponding \atsp has a very special
combinatorial structure, and the famous subtour patching technique
of Gilmore and  Gomory \cite{GiGo64} yields an $O(n\log n)$ time
algorithm for this case. R{\"o}ck
\cite{Ro84} proves that the three-machine no-wait flowshop is
strongly \np-hard, refining the previous complexity result by
Papadimitriou and Kanellakis \cite{PaKa80} for four machines. Hall
and Sriskandarajah \cite{HaSr96} provide a thorough survey of
complexity and algorithms for various no-wait scheduling models.

We say that a solution to an instance $I$ of a problem is \emph{$\rho$-approximate}
if its value is at most $\rho |OPT|$, where $|OPT|$ is the value of the optimum solution to $I$.
We say that an approximation algorithm has {\em performance
guarantee\/}   $\rho$ for some real
$\rho>1$, if it delivers $\rho$-approximate solutions for all instances. Such an approximation algorithm is then called a
{\em $\rho$-approximation\/} algorithm. A family of polynomial
time $(1+\eps)$-approximation algorithms over all $\eps>0$ is
called a {\em polynomial time approximation scheme\/} (PTAS).

For the   \nwfs with fixed number of machines, i.e.\ $Fm|no-wait|C_{max}$ in standard notation, there exists a polynomial time approximation scheme \cite{S}. The only known approximability results for the general case are $\lceil m/2 \rceil$-approximation algorithm from \cite{RoSc83} or algorithms designed for the \atsp with performance guarantees $\log_2 n$ \cite{FrGaMa82}, $0.999\log_2 n$ \cite{B}, $0.84\log_2 n$ \cite{KLSS}, $0.66\log_2 n$ \cite{FS}, $O\left(\frac{\log n}{\log \log n}\right)$ \cite{AGMGS}.

We remark that the
strongest known negative result for the general \atsp with the
triangle inequality is due to
Karpinski et al.~\cite{Karpinski13}.
They prove that unless \nnpp, the \atsp with
triangle inequality cannot have a polynomial time approximation
algorithm with performance guarantee better than $75/74$.  We are not aware of any known non-approximability results for the \nwfs.

\subsection{Our results and organization of the paper}

In this paper we show that \nwfs is as hard to approximate as \atsp, i.e.\ given an $\alpha$-approximation algorithm for \nwfs one can approximate \atsp with ratio arbitrarily close to $\alpha$. In particular, this gives APX-hardness for \nwfs. It is worth noting that \nwfs has recently received increased interest, since it was viewed as a (potentially) easy case of \atsp, and possibly a reasonable first step towards resolving the general case. It is for this reason that it was mentioned by Shmoys and Williamson~\cite{ShWi} in their discussion of open problems in approximation algorithms. Our results settle this issue.

We also give an $O(\log m)$-approximation algorithm for \nwfs. On one hand, this can be seen as an improvement over the $\lceil m/2 \rceil$-approximation from \cite{RoSc83}. But this result also shows that, unless we obtain an improved approximation for \atsp, the number of machines used by any reduction from \atsp to \nwfs has to be $e^{\Omega(\log n/\log\log n)}$. In this sense our reduction, which uses a number of machines polynomial in $n$, cannot be significantly improved.

The paper is organized as follows. In Section~\ref{sec:hardness} we give the reduction from \atsp to \nwfs. We begin by showing in Subsection~\ref{sec:atsp} that instead of general \atsp instances, it is enough to consider instances of \atspp with integer edge weights that are small relative to $|OPT|$ and polynomial in $n$. We then proceed with the reduction. We start by showing in Subsection~\ref{sec:simple} that any semi-metric can be represented as a \nwfs distance function with only a small additive error. This already shows that \nwfs distance functions are in no way ``easier'' than general semi-metrics. However, this is not enough to reduce \atspp to \nwfs, because of the last term in the objective function~\eqref{eq:nwfs-value}. To make this last term negligible, we blow-up the \atspp instance without significantly increasing the size of the corresponding \nwfs instance, by using a more efficient encoding. This is done in Subsection~\ref{sec:efficient}.

Finally, in Section~\ref{sec:algo} we present the $O(\log m)$-approximation algorithm for \nwfs.

 \section{Non-approximability results for \nwfs}
\label{sec:hardness}

 \subsection{Properties of the \atsp instances}
\label{sec:atsp}
In the rest of the paper we will use $OPT$ to denote an optimal solution of the given \atsp instance and $|OPT|$ the value of such an optimal solution.

\begin{lemma}\label{lem:structure1}
For any instance $G=(V,d)$ of \atsp and any $\varepsilon > 0$, one can construct in time $poly(n,1/\varepsilon)$ another instance $G'=(V',d')$ of \atsp with $|V'| = O(n/\varepsilon)$, such that:
 \begin{enumerate}
 \item  all arc weights in $G'$ are positive integers and the maximal arc weight $W'=O\left(\frac{n\log n}{\varepsilon}\right)$ (regardless of how large the original weights are);
 \item $W'\le \varepsilon |OPT'|$, where $OPT'$ is an optimal solution to $G'$.
 \end{enumerate}
and given an $\alpha$-approximate solution to $G'$ one can construct an $\alpha(1+O(\varepsilon))$-approximate solution to $G$ in time $poly(n,1/\varepsilon)$.
 \end{lemma}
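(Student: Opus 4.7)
\emph{Stage 1 (scaling).} Apply a known polynomial-time $O(\log n)$-approximation algorithm for \atsp to obtain a value $U$ with $|OPT|\le U\le O(\log n)\cdot|OPT|$. Multiply every arc weight by $\gamma=\Theta(n\log n/(\varepsilon U))$ and round up to the nearest positive integer, obtaining a scaled distance function $d_1$ on $V$. Rounding up preserves the triangle inequality, so $d_1$ is a valid semi-metric with positive integer weights, its maximum weight is at most $U\gamma+1=O(n\log n/\varepsilon)$, and $\gamma|OPT|=\Omega(n/\varepsilon)$, so the total rounding error of at most $n$ per Hamilton tour is an $O(\varepsilon)$ multiplicative error.

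\emph{Stage 2 (replication).} Take $t=\lceil 2/\varepsilon\rceil$ disjoint copies $G_1,\ldots,G_t$ of the scaled instance, with a designated root $r_i\in G_i$ in each copy. For $u\in G_i$ and $v\in G_j$ with $i\ne j$ set $d'(u,v)=d_1(u,r_i)+d_1(r_j,v)$, and within each copy let $d'$ agree with $d_1$. One checks that $d'$ satisfies the triangle inequality, and the resulting instance $G'$ has $|V'|=tn=O(n/\varepsilon)$ vertices with positive integer weights bounded by $W'\le 2\max d_1=O(n\log n/\varepsilon)$.

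\emph{Stage 3 (approximation preservation).} For any Hamilton tour $T'$ of $G'$, decompose its edges into intra-copy maximal sub-paths $P_i^{(1)},\ldots,P_i^{(b_i)}$ in copy $i$ (with endpoints $s_i^{(j)},t_i^{(j)}$) together with the inter-copy edges. Then the total cost contributed by copy $i$ equals $\sum_j|P_i^{(j)}|+\sum_j\bigl(d_1(t_i^{(j)},r_i)+d_1(r_i,s_i^{(j)})\bigr)$. By the triangle inequality $d_1(t_i^{(j)},s_i^{(j+1)})\le d_1(t_i^{(j)},r_i)+d_1(r_i,s_i^{(j+1)})$, this quantity majorizes the cost of the Hamilton cycle on $G_i$ obtained by concatenating the sub-paths through direct edges $t_i^{(j)}\to s_i^{(j+1)}$. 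Hence each copy's contribution is at least $|OPT|_{d_1}$, so $|OPT'|\ge t\cdot|OPT|_{d_1}$; the matching upper bound is realized by concatenating optimal Hamilton cycles of each copy via the zero-cost root-to-root inter-copy edges, so $|OPT'|=t\cdot|OPT|_{d_1}$. In particular $W'\le 2\max d_1\le 2|OPT|_{d_1}=(2/t)|OPT'|\le\varepsilon|OPT'|$. Finally, if $T'$ is $\alpha$-approximate, some copy $i$ has contribution at most $|T'|/t\le\alpha|OPT|_{d_1}$, and the corresponding Hamilton cycle of $G_i\cong G$ has cost at most $\alpha|OPT|_{d_1}$; dividing by $\gamma$ and absorbing the rounding error gives an $\alpha(1+O(\varepsilon))$-approximate tour of $G$.

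The hard part of the argument is handling approximate tours $T'$ that visit some copy $G_i$ in multiple blocks ($b_i>1$): a naive analysis only bounds the cost per copy when each copy is visited as a single block. The triangle-inequality bound on the concatenation edges $\sum_j d_1(t_i^{(j)},s_i^{(j+1)})$ by the sum of root-connection distances is precisely what extends the analysis from the single-block to the general multi-block case, and this is the technical heart of the proof.
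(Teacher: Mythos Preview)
Your argument is correct and follows the same two-stage plan as the paper: rescale the weights using a $\log n$-approximate value of $|OPT|$, then take $\lceil 2/\varepsilon\rceil$ copies joined through a distinguished vertex so that $W'\le\varepsilon|OPT'|$. The one difference is that the paper \emph{merges} the roots $r_1,\ldots,r_t$ into a single supervertex $U$ rather than keeping them separate; your Stage~3 decomposition (route each inter-copy arc through the local root and use the triangle inequality to glue the sub-paths) is then exactly the paper's ``replace each cross-copy arc by two arcs through $U$ and shortcut'' argument, written out per copy. Note that merging the roots also eliminates the zero-cost root-to-root arcs your construction produces, so that all arc weights in $G'$ are strictly positive as the lemma demands---a requirement your version technically violates but which is trivially repaired by this merge.
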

 \begin{proof}
Given an instance $G=(V,d)$ of \atsp, we first run the $\log n$-approximation algorithm for the \atsp from \cite{FrGaMa82}. Let $R$ be the value of the approximate solution found by the algorithm. We know that $|OPT|\le R \le \log n \cdot |OPT|$.

Then we add $\Phi=\frac{\varepsilon R}{n\log_2 n}$ to each arc weight and round each arc weight up to the closest multiple of $\Phi$. Let $\bar{d}(u,v)$ be the new weight of the arc $(u,v)$. We claim that the triangle inequality is still satisfied for new edge weights. Indeed, for any $u,w,v\in V$ we have
\[\bar{d}(u,v)\le d(u,v)+2\Phi\le d(u,w)+d(w,v)+2\Phi\le \bar{d}(u,w)+\bar{d}(w,v).\]
Moreover, the value of any feasible solution  for the two arc weight functions $d$ and $\bar{d}$ differs by at most $2\varepsilon R/\log n \le 2\varepsilon \cdot |OPT|$. We now divide the arc weights in the new instance by $\Phi$. The resulting graph $\hat{G}=(V,\hat{d})$ has integral arc weights. Moreover, they all have values at most $O(\frac{n\log n}{\varepsilon})$, since $d(u,v) \le OPT$ for all $u,v \in V$ by triangle inequality. Finally, any $\alpha$-approximate solution for $\hat{G}$ is also an $\alpha(1+O(\varepsilon))$-approximate solution for $G$.

To guarantee the second property we apply the following transformation to $\hat{G}$. We take $N = \lceil 2/\varepsilon \rceil$ copies of $\hat{G}$. Choose a vertex $u$ in $\hat{G}$ arbitrarily and let $u_1,\dots,u_N$ be the copies of the vertex $u$ in the copies of $\hat{G}$. We define a new graph $G'=(V',d')$ that consists of $N(n-1)+1$ vertices by merging the vertices $u_1,\dots,u_N$ into a supervertex $U$, the remaining vertices of $G'$ consist of $N$ copies of $V\setminus \{u\}$.

If an arc of $G'$ connects two vertices of the same copy of $\hat{G}$ then it has the same weight as the corresponding arc in $\hat{G}$. If an arc $(x_1',x_2')$ connects a copy of a vertex $x_1$ and a copy of a vertex  $x_2$ belonging to different copies of $\hat{G}$ then we define $d'(x_1',x_2')=\hat{d}(x_1,u)+\hat{d}(u,x_2)$, i.e.\ the weight is defined by the travel distance from $w_1$ to $w_2$ through the special supervertex $U$. By definition the maximal weight $W'$ of an arc in $G'$ is at most $2\hat{W}$, where $\hat{W}$ is the maximum weight of an arc in $\hat{G}$, and so the first constraint holds for $G'$.

Moreover, we claim that the value of the optimal Hamiltonian cycle $OPT'$ in $G'$ is exactly $N |\hat{OPT}|$, where $\hat{OPT}$ is an optimum solution for $\hat{G}$. Indeed, it is easy to see that there is a tour of length $\le N |\hat{OPT}|$ obtained by concatenating and short-cutting $N$ optimal tours, one in each copy of $\hat{G}$. On the other hand,  for any feasible tour $T$ in $G'$ we can replace any arc of $T$ that connects vertices (say $w_1$ and $w_2$) in different copies of $\hat{G}$   by two arcs $(w_1,U)$ and $(U,w_2)$. Now we have a walk $\hat{T}$ through $G'$ of the same length as $T$. $\hat{T}$ visits all the vertices of $G'$ exactly once except for the vertex $U$ which is visited multiple times. Therefore, $\hat{T}$ consists of a set of cycles that cover all vertices except $U$ exactly once and vertex $U$ is covered multiple times. We can reorder these cycles so that the walk first visits all vertices of one copy then all vertices of the second copy and so on. By applying short-cutting we obtain a collection of $N$ Hamiltonian cycles, one for each copy of $\hat{G}$. Therefore, the original tour $T$ in $G'$ cannot be shorter than $N |\hat{OPT}|$, and so $|OPT'| = N |\hat{OPT}|$.

We now have
\[ W' \le 2\hat{W} \le 2|\hat{OPT}| = 2|OPT'|/N \le \varepsilon |OPT'|,\]
so the second constraint is satisfied. The above argument is constructive, i.e.\ given a Hamiltonian cycle of length $L$ in $G'$, it produces a Hamiltonian cycle in $G$ of length at most $L/N$ in time $poly(n,1/\varepsilon)$.\qed
 \end{proof}

\begin{lemma}
\label{lem:atsp-to-atspp}
Let $G=(V,d)$ be an instance of \atsp with $|V|=n$ and $d:V\times V \rightarrow \{0,\ldots,W\}$. Then, one can construct in time $O(n)$ an instance $G'=(V',d')$ of \atspp with $|V'|=n+1$ and $d':V\times V \rightarrow \{0,\ldots,2W\}$, such that the optimal values of the two instances are the same. Moreover, given a solution $S'$ of $G'$, one can construct in time $O(n)$ a solution of $G$ with value at most the value of $S'$.
\end{lemma}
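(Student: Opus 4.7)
The plan is to fix an arbitrary vertex $v \in V$, introduce one new ``phantom'' vertex $v'$, and set $V' = V \cup \{v'\}$. I would define $d' \colon V' \times V' \to \{0,\ldots,2W\}$ so that $v'$ acts as a ``source copy'' of $v$ (its outgoing arcs to $V\setminus\{v\}$ inherit $d(v,\cdot)$) while the arcs that should \emph{not} appear in a canonical path -- all arcs leaving $v$, all arcs entering $v'$, and the arc $(v',v)$ -- receive the ``blocking'' weight $2W$. Concretely: $d'(x,y)=d(x,y)$ for every $(x,y)\in V\times V$ with $x \neq v$; $d'(v',y)=d(v,y)$ for $y\in V\setminus\{v\}$; and $d'(x,y)=2W$ in all remaining cases ($x=v$, or $y=v'$, or $(x,y)=(v',v)$). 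Since original weights lie in $\{0,\ldots,W\}$, all entries of $d'$ lie in $\{0,\ldots,2W\}$ as required. The easy direction $|OPT'|\le|OPT|$ follows by turning any Hamiltonian cycle $v\to y_1\to\cdots\to y_{n-1}\to v$ in $G$ into the Hamiltonian path $v'\to y_1\to\cdots\to y_{n-1}\to v$ in $G'$, which avoids every blocking arc and has exactly the same cost.

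For the reverse inequality and the ``Moreover'' clause I would describe a single $O(n)$-time recovery routine that, from any Hamiltonian path $P'$ in $G'$ of cost $c$, extracts a Hamiltonian cycle in $G$ of cost at most $c$. First, relabel the occurrence of $v'$ in $P'$ as $v$, obtaining a walk $U$ on $V$ that visits $v$ exactly twice and every other vertex exactly once. Each blocking arc of $P'$ is replaced in $U$ either by an ordinary arc of $d$-weight $\le W$ (savings $\ge W$) or, if the blocking arc was $(v,v')$ or $(v',v)$, by a zero-cost self-loop at $v$ (savings $2W$); thus $\mathrm{cost}_G(U)\le c - \text{(total savings)}$. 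Finally, shortcut the interior occurrences of $v$ in $U$ using the triangle inequality (this never increases cost), and if $v$ is still not an endpoint of the resulting simple walk, close it with a single back-arc of cost at most $W$ to produce a Hamiltonian cycle in $G$.

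The main obstacle is the subcase in which both $v$ and $v'$ lie in the interior of $P'$: there the triangle-inequality shortcut followed by the closing arc incurs up to $2W$ of overhead, so the proof requires two units of savings to compensate. This is exactly what the construction delivers -- an interior $v$ forces an outgoing $2W$-arc in $P'$, an interior $v'$ forces an incoming $2W$-arc, and even when $v$ and $v'$ are adjacent in $P'$ so those forced arcs coincide into a single $(v,v')$ or $(v',v)$ edge, that lone blocked arc still produces $2W$ savings via a zero-cost self-loop in $U$. A short case split on the pair (endpoint / interior) of roles played by $v$ and $v'$ in $P'$ then shows the cycle cost is at most $c$ in every situation, yielding $|OPT|\le|OPT'|$ together with the $O(n)$-time recovery. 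Both the initial construction of $G'$ and the recovery procedure clearly touch only $O(n)$ entries, matching the claimed time bound.
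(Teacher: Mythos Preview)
Your construction is essentially the paper's (fix a vertex and split it into a ``source'' copy $v'$ and a ``sink'' copy $v$, with weight $2W$ on the forbidden arcs), but there is one genuine gap: you set $d'(v',v)=2W$, whereas the paper sets the corresponding arc $d'(v_{out},v_{in})=0$. Your choice breaks the triangle inequality. For any $y\in V\setminus\{v\}$,
\[
d'(v',y)+d'(y,v)=d(v,y)+d(y,v)\le 2W = d'(v',v),
\]
with strict inequality whenever $d(v,y)+d(y,v)<2W$, which is the generic situation. Since an \atspp instance is required to be a semimetric---and this is not cosmetic: the lemma's output is later fed into Theorem~\ref{thm:main}, which explicitly needs a semimetric---your $G'$ is not a valid \atspp instance. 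The fix is a one-entry change: put $d'(v',v)=0$. (You never rely on that arc being expensive anyway; in your recovery it would just become a zero-cost self-loop with \emph{zero} savings, which is harmless because if $(v',v)$ is used then $v'$ and $v$ are adjacent and $U$ already starts and ends at $v$.)

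A secondary point: your recovery routine is workable in spirit but is stated imprecisely. ``Shortcut the interior occurrences of $v$'' taken literally can delete \emph{both} occurrences and lose $v$ entirely; and ``if $v$ is still not an endpoint \ldots\ close it with a back-arc'' suggests no closing arc is needed when $v$ is an endpoint, which is false---you always need to close the simple path into a cycle unless $U$ already began and ended at $v$. The paper avoids the whole case split with a cleaner two-move argument directly on $P'$: (i) if $P'$ does not start at $v_{out}$, cut $P'$ just before $v_{out}$ and swap the two pieces (the deleted incoming arc to $v_{out}$ had cost $2W$, the inserted arc costs at most $2W$); (ii) if $v_{in}$ is not last, delete it from its current position and append it at the end, and a one-line computation using $d'(v_{in},\cdot)=2W$ shows the change in length is $\le 0$. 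After these two moves the path runs from $v_{out}$ to $v_{in}$ and corresponds directly to a Hamiltonian cycle in $G$ of no greater cost.
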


\begin{proof}
We fix a vertex $v \in V$ and define $G'$ as follows:
\begin{itemize}
\item $V' = V \setminus \{v\} \cup \{v_{in},v_{out}\}$, i.e.\ we split $v$ into two vertices.
\item For all pairs $x,y \in V \setminus \{v\}$ we put $d'(x,y) = d(x,y)$.
\item For all $x \in V \setminus \{v\}$ we put $d'(v_{out},x) = d(v,x)$ and $d'(x,v_{in}) = d(x,v)$, i.e.\ $v_{in}$ inherits the incoming arcs of $v$ and $v_{out}$ inherits the outgoing arcs. We also put $d'(v_{out},v_{in})=0$.
\item All the remaining arcs get length of $2W$.
\end{itemize}
It is easy to verify that $d'$ satisfies the triangle inequality.

We now need to show that the shortest Hamiltonian tour in $G$ has the same length as the shortest Hamiltonian path in $G'$. Note that Hamiltonian tours in $G$ correspond to Hamiltonian paths in $G'$ starting in $v_{out}$ and ending in $v_{in}$, and that this correspondence maintains the total length. Using this observation, for any tour in $G$, one can obtain a path in $G'$ of the same length.

In the opposite direction, let us consider a path $S'$ in $G'$. We will show how to transform $S'$ without increasing its length, so that it begins in $v_{out}$ and ends in $v_{in}$. We proceed in two steps. First, if $S'$ does not begin in $v_{out}$, we break it before $v_{out}$ and swap the order of the two resulting subpaths. This does not increase the length since any incoming arc of $v_{out}$ has length $2W$.

Now, suppose that $v_{in}$ is not the last vertex on $S'$, i.e.\ it is visited between $x$ and $y$ for some $x,y \in V$ ($v_{in}$ cannot be the first vertex, since $v_{out}$ is). We remove $v_{in}$ from $S'$ and append it on the end. The total change in the length of the path is
\[ \Delta = d'(x,y) + d'(z,v_{in}) - d'(x,v_{in}) - d'(v_{in},y),\]
where $z$ is the last vertex of the path. Using the definition of $d'$ we get
\[ \Delta \le W + W - 0 - 2W \le 0,\]
so this transformation does not increase the length of the path, which ends the proof.\qed
\end{proof}

\subsection{ A Simple Embedding}
\label{sec:simple}

In this section we show that jobs with the distance function $\delta$ in some sense form a universal space for all semi-metrics (approximately).
More precisely, let $\mathcal{J}_{m,T}$ be the set of all $m$-machine jobs with all operations of length at most $T$, i.e.\ $\mathcal{J}_{m,T} = \{0,1,\ldots,T\}^m$. Then
\begin{theorem}
\label{thm:main}
For any $n$-point semi-metric $(V,d)$, where $d:V\rightarrow \{0,..,D\}$, there exists a mapping $f:V\rightarrow \mathcal{J}_{2nD,1}$, such that
\[ \delta(f(u),f(v)) = d(u,v)+1 \textrm{ for all } u,v \in V, \]
where $\delta$ is the distance function defined by~\eqref{eq:delta-definition}.
\end{theorem}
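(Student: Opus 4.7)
My plan is to realize $f$ as a concatenation construction indexed by the vertices of $V$. Fix an ordering $v_1,\ldots,v_n$ of $V$ and, for each $u\in V$, set $f(u)=B_1(u)B_2(u)\cdots B_n(u)$, where each block $B_i(u)\in\{0,1\}^{2D}$ occupies exactly $2D$ consecutive machines and encodes the distance $d_i^u:=d(u,v_i)$ through a fixed 0/1 pattern that is the same up to the parameter $d_i^u$. The target is to design the pattern so that (a) the total processing time $L(B_i(u))$ depends only on $i$ (and not on $u$), and (b) the ``per-column'' distance satisfies $\delta(B_i(u),B_i(v))=d_i^u-d_i^v+1$ whenever $d_i^u\ge d_i^v$, and does not exceed this value otherwise.

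Assuming such blocks have been built, the proof proceeds in two clean steps. First, equal per-block processing time causes the cross-block terms in $P_{f(u)}(q)-P_{f(v)}(q-1)$ to cancel: for $q$ falling in block $i$ the contributions from blocks $1,\ldots,i-1$ contribute $\sum_{j<i}(L(B_j(u))-L(B_j(v)))=0$, so the definition~\eqref{eq:delta-definition} decomposes as
\[ \delta(f(u),f(v))=\max_{1\le i\le n}\delta(B_i(u),B_i(v)). \]
Second, the triangle inequality applied to the semi-metric $d$ gives $d_i^u-d_i^v\le d(u,v)$ for every $i$, with equality attained at the index $i$ for which $v_i=v$ (since then $d_i^u=d(u,v)$ and $d_i^v=d(v,v)=0$). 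Combined with the block identity this yields $\delta(f(u),f(v))=\max_i(d_i^u-d_i^v+1)=d(u,v)+1$, exactly as stated.

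The hard part is step two — designing the block. The most natural candidate is the mirrored ``wedge'' $B(d)=1^d0^{D-d}1^{D-d}0^d$, which has length $2D$ and total processing time $D$ for every $d\in\{0,\ldots,D\}$; a routine case analysis over the four phases of $P_{B(d_1)}(q)$ and $P_{B(d_2)}(q-1)$ confirms that the maximum in the $\delta$-definition is achieved when both $q$ and $q-1$ lie in the ``interior'' $1$-phase, giving $\delta(B(d_1),B(d_2))=d_1-d_2+1$. The catch is the extreme $d_1=D$: then the interior $1$-phase of $B(d_1)$ is empty and the maximum drops to $d_1-d_2$, losing precisely the desired $+1$. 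Handling this boundary case — by refining the pattern so that a distinguished $1$-operation is always present at the point where the prefix discrepancy peaks (concretely, shifting one of the $1$-runs by a single position or adding a dedicated ``marker'' slot while keeping the block length and total processing time constant across $u$) — is the place where most care is needed. Once the block identity holds uniformly for all $d_1,d_2\in\{0,\ldots,D\}$, the concatenation argument and the triangle inequality combine to finish the proof.
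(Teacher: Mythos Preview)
Your overall strategy coincides with the paper's: encode $f(u)$ as a concatenation of $n$ blocks of length $2D$, one per target vertex, each with total processing time $D$ so that $\delta$ on concatenations equals the blockwise maximum, and then invoke the triangle inequality, with equality at the index $i$ where $v_i=v$, to conclude $\max_i(d_i^u-d_i^v+1)=d(u,v)+1$. That part matches the paper exactly.

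The only difference is the block gadget. The paper uses the simpler sliding-window pattern $B_i^D = 0^{D-i}1^D0^{i}$, a single contiguous run of $D$ ones shifted by $i$, and asserts $\delta(B_i^D,B_j^D)=\max(i-j+1,0)$ directly; no four-phase case analysis is needed. Your mirrored wedge $1^{d}0^{D-d}1^{D-d}0^{d}$ is more involved and, as you observe, breaks at $d_1=D$.

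Your boundary concern is well placed, and in fact it bites the paper's block as well: since $L(B_i^D)=D$, one always has $\delta(B_i^D,B_j^D)\le D$, so the claimed value $i-j+1$ cannot hold at $(i,j)=(D,0)$, where it would equal $D{+}1$. Hence the exact identity $\delta(f(u),f(v))=d(u,v)+1$ fails precisely for pairs with $d(u,v)=D$. The cleanest repair, for either construction, is to run the block with parameter $D{+}1$ in place of $D$ so that the extremal index is never used, at the cost of $2n$ extra machines; this is harmless for every subsequent use of the theorem. Your suggested fixes (shifting a $1$-run, adding a marker slot) are not carried out and would need care to keep the block length at exactly $2D$; replacing the wedge by the paper's sliding window, with parameter $D{+}1$, is the straightforward way to close the gap.
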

\begin{proof}
 First we define a collection of $D+1$ jobs ${\cal J}(D)=\{B_0^D,...,B_D^D\}$ on $2D$ machines with all operations of length either zero or one. Obviously, ${\cal J}(D)\subseteq J_{2D,1}$. The job $B_i^D$ consists of $D-i$ zero length operations that must be processed on machines $M_1,\dots, M_{D-i}$, followed by $D$ unit length operations that must be processed on machines $M_{D-i+1},\dots,M_{2D-i}$. The last $i$ operations have zero length. By construction,  $L(B_i^D)=D$ for  $i=0, \dots, D$. Moreover, $\delta(B_i^D,B_j^D) = \max(i-j+1,0)$.

In the following, we will use the symbol $\cdot$ to denote concatenation of sequences, and in particular sequences of jobs.
 Let $a_i\in {\cal J}(D)$ and $b_i\in {\cal J}(D)$ for $i=1,\dots,k$. Consider the job $A=a_1\cdot a_2\cdot \ldots \cdot a_k\in J_{k2D,1}$ processed on $k2D$ machines $M'_1,\dots,M'_{k2D}$. That is, job $A$ has the same operation length on machine $M'_{(i-1)2D+r}$ as job $a_i$ on machine $M_r$ for $r=1,\dots, 2D$. Analogously, let $B=b_1\cdot b_2\cdot \ldots \cdot b_k \in J_{k2D,1}$.
 Then
\begin{equation}\label{Lem2}
 \delta(A, B) = \max\{\delta(a_1,b_1),\ldots,\delta(a_k,b_k)\}.
 \end{equation}
This is because by making job $B$ start $X$ time steps after job $A$, we ensure that each sequence $b_i$ of operations starts $X$ time steps after the sequence $a_i$ of operations. Therefore, in any feasible schedule $X\le \max\{\delta(a_1,b_1),\ldots,\delta(a_k,b_k)\}$. On the other side, starting job $B$ exactly $\max\{\delta(a_1,b_1),\ldots,\delta(a_k,b_k)\}$ time steps after the start of the job $A$ gives a feasible schedule.

Let $V=\{v_1,...,v_n\}$. We define $f(v_i) = B_{d(v_i,v_1)}^D \cdot\ldots \cdot B_{d(v_i,v_n)}^D$. Then by  (\ref{Lem2})
we have
\[ \delta(f(v_i),f(v_j)) = \max_k \left\{ d(v_i,v_k)-d(v_j,v_k)+1,0\right\}=d(v_i,v_j)+1.\]
The last equality follows from the triangle inequality and the fact that $d(v_j,v_j)=0$.\qed
\end{proof}

\subsection{A More Efficient Embedding}
\label{sec:efficient}
Our main result concerning the relationship between \atsp and \nwfs is the following.
\begin{theorem}\label{thm:atsp-to-nwfs}
Let $G=(V,d)$ be an instance of \atsp with $|V|=n$  and let $OPT$ be the optimum TSP tour for $G$. Then, for any constant $\e > 0$, there exists an instance $I$ of \nwfs, such that given an $\alpha$-approximate solution to $I$, we can find a solution to $G$ with length at most
\[ \alpha(1+O(\e))|OPT|.\]
Both $I$ and the solution to $G$ can be constructed in time $poly(n,1/\e)$.
\end{theorem}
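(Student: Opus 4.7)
I would compose Lemmas~\ref{lem:structure1} and~\ref{lem:atsp-to-atspp} with a refined version of the embedding from Theorem~\ref{thm:main}. Applying those two lemmas to $G$ produces an \atspp instance $G'=(V',d')$ with $|V'|=O(n/\e)$, integer arc weights bounded by some $W=O(n\log n/\e)$, and $W\le O(\e|OPT'|)$, in such a way that any $\alpha(1+O(\e))$-approximate solution for $G'$ pulls back to an $\alpha(1+O(\e))$-approximate solution for $G$ in polynomial time. It therefore suffices to reduce \atspp on $G'$ to an instance of \nwfs with loss only $(1+O(\e))$.

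Suppose we had an embedding $f\colon V'\to \mathcal{J}_{m,1}$ with $\delta(f(u),f(v))=d'(u,v)+1$ for every $u,v$. Then by~\eqref{eq:nwfs-value} every permutation $\sigma$ of $V'$ has makespan
\[\cmax(\sigma)=\mathrm{len}(\sigma)+(|V'|-1)+L(f(\sigma_{|V'|})),\]
where $\mathrm{len}(\sigma)$ denotes the \atspp length of the vertex ordering induced by $\sigma$. The additive $|V'|-1$ is $O(\e|OPT'|)$ because the bounds above give $|OPT'|\ge W/\e=\Omega(n\log n/\e^2)$, so the only dangerous term is the last-job length $L(f(\sigma_{|V'|}))$; for the ratio to transfer from \nwfs back to \atspp, we need $L(f(v))=O(\e|OPT'|)$ uniformly in $v$.

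The heart of the proof, and the main obstacle, is engineering the embedding to satisfy this last bound. The unmodified construction of Theorem~\ref{thm:main} gives $L(f(v))=|V'|\cdot W$, already larger than $|OPT'|$. I would exploit the special structure produced by Lemma~\ref{lem:structure1}: $G'$ consists of $N$ identical copies of a smaller graph $\hat G$ glued at a single supervertex $U$, so every inter-copy distance factors as $\hat d(v,U)+\hat d(U,w)$. This redundancy should let me share coordinate blocks across the $N$ copies, keeping only one block per \emph{original} vertex rather than one per vertex of $G'$, supplemented by a small number of auxiliary ``bridge'' coordinates encoding the $\hat d(\cdot,U)$ and $\hat d(U,\cdot)$ values. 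This reduces $L(f(v))$ from $\Theta(NnW)$ to $O(nW)$, which becomes $O(\e|OPT'|)$ once $N$ is chosen large enough (if necessary by applying Lemma~\ref{lem:structure1} with parameter $\e/n$ instead of $\e$). The delicate point is that inter-copy distances are sums while $\delta$ is a max of differences, so the bridge coordinates must be laid out carefully so that these sums appear as max-of-differences; a case analysis on intra-copy, inter-copy, and supervertex pairs is then needed to verify $\delta(f(u),f(v))=d'(u,v)+1$ exactly (or with an additive slack of $O(\e|OPT'|/|V'|)$, which is equally absorbable).

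Once such an embedding is in hand, the reduction closes itself. On one hand, mapping the optimal \atspp tour of $G'$ to a permutation of jobs yields $\cmax^*\le|OPT'|+(|V'|-1)+\max_v L(f(v))\le(1+O(\e))|OPT'|$. On the other hand, an $\alpha$-approximate \nwfs schedule $\sigma$ satisfies $\mathrm{len}(\sigma)\le\cmax(\sigma)\le\alpha\cmax^*\le\alpha(1+O(\e))|OPT'|$, so its induced vertex ordering is an $\alpha(1+O(\e))$-approximate \atspp tour of $G'$. Undoing the reductions of Lemmas~\ref{lem:atsp-to-atspp} and~\ref{lem:structure1} delivers the claimed $\alpha(1+O(\e))$-approximation for $G$, and the whole construction runs in time $\mathrm{poly}(n,1/\e)$.
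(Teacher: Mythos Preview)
Your high-level plan---blow up the instance so that the last-job term $L(f(\sigma_{|V'|}))$ is amortised away, and share embedding blocks across identical copies---is exactly the paper's. But two pieces of the execution are off. The claim that $|V'|-1=O(\e|OPT'|)$ ``because $|OPT'|\ge W/\e=\Omega(n\log n/\e^2)$'' treats the \emph{upper} bound $W=O(n\log n/\e)$ as a lower bound; after Lemma~\ref{lem:structure1} you only know that all arc weights are positive integers, hence $|OPT'|\ge|V'|$, which gives ratio $\le 1$ rather than $O(\e)$. The paper repairs this with an explicit extra scaling of all distances by $\lceil 1/\e\rceil$. More seriously, you propose to reuse the blow-up already inside Lemma~\ref{lem:structure1}, whose inter-copy distances are the \emph{variable} sums $\hat d(v,U)+\hat d(U,w)$, and to realise these via unspecified ``bridge coordinates''. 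You correctly flag that $\delta$ is a max of differences while these targets are sums, and this is a real obstruction: a single $B^D$-block contributes $\max(a_i-a_j+1,0)$, so to hit $\hat d(v_i,U)+\hat d(U,v_j)$ one would need a coordinate where vertex $v_i$ carries $\hat d(v_i,U)+C$ as source and $C-\hat d(U,v_i)$ as target, forcing $\hat d(v_i,U)=-\hat d(U,v_i)$. I do not see how a small number of blocks circumvents this without re-inflating the job length, and you give no construction.

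The paper avoids this difficulty by not reusing Lemma~\ref{lem:structure1}'s blow-up at all. Instead it performs a \emph{second} blow-up of the \atspp instance $G'$ into $N$ copies joined by arcs of a single \emph{uniform} length $2W'$. The copies being identical, the Theorem~\ref{thm:main} embedding is literally the same jobset for every copy (so each job has length $O(n'W')$, independent of $N$), and the only remaining task is to force $\delta=2W'+1$ between different copies while contributing $1$ within a copy. This is handled by a short gadget (Lemma~\ref{lem:gadget2}) of size $O(W'\log N)$, built from $\{0,1\}$-strings with pairwise incomparable supports. Taking $N=n'$ then makes the last-job term, after dividing by $N$, equal to $O(W')=O(\e|OPT'|)$, and the arithmetic closes exactly as in your final paragraph.
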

\begin{proof}
 We start by applying Lemma~\ref{lem:structure1} to $G$ and then Lemma~\ref{lem:atsp-to-atspp} to the resulting instance of \atsp.
Finally, we scale all the distances up by a factor of $\lceil 1/\varepsilon\rceil$.
In this way we obtain an instance $G'=(V',d')$ of \atspp, such that:
\begin{itemize}
\item $n' = |V'| = O(n/\varepsilon)$.
\item $n' \le \varepsilon OPT'$ (this is due to scaling, since all arc weights are positive integers before scaling).
\item $G'$ has integral arc weights.
\item $W' = O(n\log_2 n/\varepsilon^2)$ and $W' \le \varepsilon |OPT'|$, where $W'$ and $OPT'$ are the maximum arc weight and the optimum solution for $G'$, respectively.
\item Given an $\alpha$-approximate solution to $G'$, one can obtain an $\alpha(1+O(\varepsilon))$-approximate solution to $G$ (using Lemma~\ref{lem:structure1} and Lemma~\ref{lem:atsp-to-atspp}).
\end{itemize}

Note that one can simply encode $G'$ as a \nwfs instance using Theorem~\ref{thm:main}. The problem with this approach is that the objective value in \nwfs contains an additional term that is not directly related to the   distances in $G'$. If this term dominates the makespan, approximation algorithms for \nwfs are useless for the original \atspp instance.

To overcome this obstacle we first blow $G'$ up by creating $N$ (to be chosen later) copies of it. Let $G_1=(V_i,d_i),\ldots,G_N=(V_N,d_N)$ be these copies. We join these copies into a single instance $\hat{G}=(\hat{V},\hat{d})$ by putting edges of length $2W'$ between all pairs of vertices from different copies. Note that any TSP path in $\hat{G}$ can be transformed into a path in which vertices of the same copy form a subpath, without increasing its cost. Therefore
\begin{observation}
The cost of the optimum solution for $\hat{G}$ is $N|OPT'|+2W'(N-1)$. In the opposite direction, given a TSP path of cost $C$ in $\hat{G}$, one can obtain a TSP path of cost $\frac{C-2W'(N-1)}{N}$ in $G'$.
\end{observation}
We transform $\hat{G}$ into a \nwfs instance as follows. We apply the construction of Theorem~\ref{thm:main} to each $G_i$ to obtain $N$ identical jobsets $J_1,\ldots,J_N$. We then augment these jobs to enforce correct distances between jobs in different $J_i$. To this end we introduce new gadgets.
\begin{lemma}
\label{lem:gadget2}
For any $N \in \mathbb{N}$ there exists a set of $N$ jobs $H_1,\ldots,H_N$, each of the jobs using the same number of machines $O(D\log N)$ and of the same total length $O(D \log N)$, such that $\delta(H_i,H_i)=1$ and $\delta(H_i,H_j) = D$ for $i \neq j$.
\end{lemma}
\begin{proof}
We will use the following two jobs as building blocks: $H^0 = (10)^{2D}$ and $H^1 = 1^{2D}0^{2D}$ ($x^D$ here means a sequence constructed by repeating the symbol $x$ exactly $D$ times). Note that they have the same total length of $2D$, the same number of machines $4D$, and that $\delta(H^0,H^1) = \delta(H^0,H^0)=\delta(H^1,H^1)=1$ and $\delta(H^1,H^0)=D$.

Let $k$ be smallest integer such that ${2k \choose k} \ge N$. Clearly $k=O(\log N)$. Consider characteristic vectors of all $k$-element subsets of $\{1,\ldots,2k\}$, pick $N$ such vectors $R_1,\ldots,R_N$. Now, construct $H_i$ by substituting $H^0$ for each $0$ in $R_i$ and $H^1$ for each $1$. Analogously to (\ref{Lem2}), we derive that the distances between $H_i$ are as claimed. Also, the claimed bounds on the sizes of $H_i$ follow directly from the construction.\qed
\end{proof}

Using the above lemma it is easy to ensure correct distances for jobs in different $J_i$. Simply augment all jobs with gadgets described in Lemma~\ref{lem:gadget2}, same gadgets for the same $J_i$, different gadgets for different $J_i$. Here $D=2W'+1$, so the augmentation only requires $O(W'\log N)$ extra machines and extra processing time.

This ends the construction of the instance of \nwfs. The optimum solution in this instance has cost
\[ N|OPT'|+2W'(N-1)+(Nn'-1) + 2W'n'+O(W'\log N).\]
The $Nn'-1$ term here comes from the additive error in Theorem~\ref{thm:main}, and the $2W'n'+O(W'\log N)$ term corresponds to the processing time of the last job in the optimum solution.

Given an $\alpha$-approximate solution to the flowshop instance, we can obtain a TSP path $ALG_{\hat{G}}$ for $\hat{G}$ with cost
\begin{align*}
 |ALG_{\hat{G}}| \le \alpha (N|OPT'|+2W'(N-1)+(Nn'-1) + 2W'n'+O(W\log N)) \\
- (Nn'-1) - 2W'n'-O(W'\log N),
\end{align*}
which is just
\[ \alpha (N|OPT'|+2W'(N-1)) + (\alpha-1)( (Nn'-1) + 2W'n' + O(W'\log N) ).\]
As observed earlier, from this we can obtain a solution to $G'$ with value at most
\[\frac{|ALG_{\hat{G}}|-2W'(N-1)}{N}\]
which is bounded by
\[\frac{\alpha N|OPT'|+(\alpha-1)(2W'(N-1)+(Nn'-1) + 2W'n' +O(W'\log N)) }{N} .\]
By taking $N=n'$ we can upper-bound this expression by
\[\alpha|OPT'| + (\alpha-1)(2W'+n') + 2W' + O\left(\frac{W'\log N}{N}\right). \]
Using the fact that $\max\{W',n'\}\le \varepsilon |OPT'|$ we can upper-bound this by
\[\alpha|OPT'| + O(\alpha\e)|OPT'| = \alpha(1+O(\varepsilon))|OPT'|.\]
As noted earlier, we can transform this $\alpha(1+O(\varepsilon))$-approximate solution to $G'$ into a $\alpha(1+O(\varepsilon))$-aproximate solution to $G$.

As for the running time, it is polynomial in the size of the \nwfs instance constructed. We have $O(Nn') = O(n^2/\varepsilon^2)$ jobs in this instance, and $O(W' n') + O(W'\log N) = O(n^2\log n / \varepsilon^3)$ machines, so the running time
is $poly(n,1/\varepsilon)$.\qed
\end{proof}

Using the result of the Karpinski et al.~\cite{Karpinski13} for the  \atsp we derive
\begin{theorem}
\nwfs is not approximable with factor better than $\frac{75}{74}$, unless $P=NP$.
\end{theorem}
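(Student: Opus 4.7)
The plan is to derive this theorem directly from Theorem~\ref{thm:atsp-to-nwfs} by a standard contrapositive argument, using the Karpinski et al.\ hardness result~\cite{Karpinski13} as a black box. Concretely, I would suppose for contradiction that there is a polynomial time $\alpha$-approximation algorithm for \nwfs with $\alpha = \frac{75}{74} - \eta$ for some fixed $\eta > 0$, and show that this yields, for every $\varepsilon > 0$, a polynomial time $\alpha(1+O(\varepsilon))$-approximation algorithm for \atsp.

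The key step is to invoke Theorem~\ref{thm:atsp-to-nwfs}: given any \atsp instance $G$ and any $\varepsilon > 0$, it constructs in time $\mathrm{poly}(n, 1/\varepsilon)$ a \nwfs instance $I$ such that any $\alpha$-approximate solution to $I$ can be turned, in the same running time, into a solution to $G$ of cost at most $\alpha(1+O(\varepsilon))|OPT|$. Feeding $I$ into the assumed \nwfs approximation algorithm and then applying this conversion yields an $\alpha(1+O(\varepsilon))$-approximation for \atsp running in time $\mathrm{poly}(n,1/\varepsilon)$.

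Now I would pick $\varepsilon$ small enough (independent of $n$) so that the hidden constant $c$ in the $O(\varepsilon)$ satisfies
\[
\left(\tfrac{75}{74} - \eta\right)(1 + c\varepsilon) < \tfrac{75}{74}.
\]
Since $\eta > 0$ is fixed, such a constant $\varepsilon > 0$ exists, and the resulting algorithm runs in polynomial time in $n$. This produces a polynomial time approximation algorithm for \atsp with a performance guarantee strictly better than $\frac{75}{74}$, which by \cite{Karpinski13} is impossible unless $P = NP$. This contradiction establishes the theorem.

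There is no real obstacle here beyond correctly threading the constants: the only thing to be careful about is that the $O(\varepsilon)$ loss in Theorem~\ref{thm:atsp-to-nwfs} has a constant that does not depend on $\alpha$ or on the input, so one is genuinely free to choose $\varepsilon$ as a function of $\eta$ only. Everything else is immediate from the already established reduction and the cited non-approximability result.
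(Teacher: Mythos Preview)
Your proposal is correct and matches the paper's approach exactly: the paper simply states that the result follows from combining Theorem~\ref{thm:atsp-to-nwfs} with the Karpinski et al.\ lower bound for \atsp, without spelling out the constant-threading argument. If anything, you have supplied more detail than the paper does in justifying why the $O(\varepsilon)$ loss can be absorbed into the gap below $75/74$.
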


 \section{An $O(\log m)$-Approximation Algorithm for \nwfs.}
\label{sec:algo}

\begin{theorem}
There exists an $O(\log m)$-approximation algorithm for \nwfs.
\end{theorem}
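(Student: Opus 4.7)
The plan is to leverage structural properties of the NWFS distance function $\delta$ to improve the classical $O(\log n)$-approximation of \atsp \cite{FrGaMa82} down to $O(\log m)$ in the NWFS setting. I start with two elementary lower bounds on $|OPT|$: the machine-load bound $|OPT|\ge T:=\max_q T_q$ where $T_q := \sum_j t_{jq}$ (machine $q$ must sequentially process all $n$ jobs), and the longest-job bound $|OPT| \ge L_{\max}:=\max_j L(j)$. By the makespan identity~(\ref{eq:nwfs-value}), the trailing term $L(\sigma_n) \le L_{\max} \le |OPT|$ is absorbed into any $O(1)$-approximation, so it suffices to produce a permutation $\sigma$ whose total $\delta$-cost satisfies $\sum_{k=1}^{n-1}\delta(\sigma_k,\sigma_{k+1}) = O(\log m)\cdot |OPT|$.

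The core algorithmic component would be the Frieze-Galbiati-Maffioli iterative cycle-cover scheme applied to the $\delta$-weighted complete digraph on the jobs. Each round finds a minimum-weight cycle cover (of total weight at most $|OPT|$, since any feasible Hamiltonian cycle is itself a cycle cover) and contracts each cycle to a single representative, until only one super-node remains. The standard analysis bounds the number of rounds by $\log_2 n$ and gives an $O(\log n)$-approximation. To squeeze the bound down to $O(\log m)$, I would exploit the fact that $\delta(i,j) = \max_{q=1,\ldots,m} (P_i(q) - P_j(q-1))$ is a pointwise maximum of $m$ simple linear quantities in the cumulative processing-time profiles $P_j(q)$. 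Charging each arc $(i,j)$ of a cycle cover to the machine $q^*(i,j)$ attaining this maximum and then summing the charges around each contracted cycle yields a telescoping identity on the $P_j(q)$-values, which bounds the total per-round charge to machine $q$ by the restricted load $\sum_{j}t_{jq}\le T_q$.

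The crux of the analysis — and the main obstacle — is the amortization across the iterations of the cycle-cover scheme: the naive bound gives a total per-machine charge of $O(\log n)\cdot T_q$, whereas we require $O(\log m)\cdot T_q$. The plan for this step is to group the $m$ machines geometrically (say by the magnitudes of the loads $T_q$, or by the scales of the profiles $P_j(q)$) into $O(\log m)$ buckets, and to show that within each bucket the contraction reduces the active instance to a single super-node in only $O(1)$ rounds, using the fact that jobs with similar profiles have $\delta$-distances close to zero and thus can be aggregated at essentially no cost. Summing the per-bucket, per-machine contributions then yields total $\delta$-cost $O(\log m)\cdot T \le O(\log m)\cdot |OPT|$, which combined with $L(\sigma_n)\le L_{\max}\le |OPT|$ produces the desired $O(\log m)$-approximate schedule.
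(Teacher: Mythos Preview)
Your proposal has a genuine gap at precisely the step you flag as ``the crux of the analysis''. The telescoping claim does not hold as stated: if you charge each arc $(i,j)$ of a cycle cover to the machine $q^*(i,j)$ attaining the maximum in $\delta(i,j)=P_i(q^*)-P_j(q^*-1)$, then the arcs charged to a fixed machine $q$ need not form cycles (or even closed walks), so summing $P_i(q)-P_j(q-1)$ over those arcs does not telescope to $\sum_j t_{jq}$. What \emph{does} telescope is the sum around a single cycle when you use the \emph{same} $q$ for every arc, but that gives a lower bound on the cycle's $\delta$-cost, not an upper bound. The subsequent bucketing step (``within each bucket the contraction reduces the active instance to a single super-node in only $O(1)$ rounds'') is asserted without any mechanism; restricting attention to a subset of machines does not make $\delta$-distances small, since $\delta$ is a maximum over \emph{all} machines.

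The paper's argument is far simpler and avoids all of this. Run the Frieze--Galbiati--Maffioli scheme, but at every contraction pick as representative the job of \emph{smallest total length} $L(j)$ in its cycle. After $\lfloor\log_2 m\rfloor$ rounds (total cost $\le \log_2 m\cdot|OPT|$), either you are done, or the surviving set $S$ has the property that each $j\in S$ is the shortest job in a connected component of size at least $m$ in the union of the cycle covers. Hence
\[
\sum_{j\in S} L(j)\;\le\;\frac{1}{m}\sum_{j=1}^n L(j)\;=\;\frac{1}{m}\sum_{q=1}^m T_q\;\le\;C_{\max}^*,
\]
and since $\delta(i,j)\le L(i)$, an \emph{arbitrary} Hamiltonian cycle on $S$ costs at most $\sum_{j\in S}L(j)\le C_{\max}^*$. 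Patching this onto the accumulated Eulerian subgraph and shortcutting gives the $O(\log m)$ guarantee. The missing idea in your proposal is this ``shortest-job representative'' rule, which lets you stop after $\log_2 m$ rounds and finish in one shot, rather than trying to amortize the FGM rounds themselves.
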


\begin{proof}
Consider any instance $I$ of \nwfs. Let $G=G_I$ be the \atsp instance resulting from a standard reduction from \nwfs to \atsp, i.e.\ $G$ is obtained by adding a dummy all-zero job to $I$ and using $\delta$ as the distance function.

Our algorithm is a refinement of the approximation algorithm of Frieze, Galbiati and Maffioli~\cite{FrGaMa82}.
This algorithm starts by finding a minimum cost cycle cover $C_0$ in $G$.
 Since $OPT$ is a cycle cover we know that $|C_0|\le |OPT|$.
 After that we choose a single vertex from each cycle -- one that corresponds to the shortest job (i.e. we choose a vertex $j$ with smallest $L(j)$). We then take $G_1$ to be the subgraph of $G$ induced by the selected vertices. As was noted in \cite{FrGaMa82} the optimal \atsp solution ($OPT_1$) for $G_1$ is at most as long as $OPT$, i.e.\ $|OPT_1|\le |OPT|$.

We now reiterate the above procedure: We find a minimum cycle cover $C_1$ in $G_1$. We again have $|C_1|\le |OPT_1|$. We choose a single vertex per cycle of $C_1$, again corresponding to the job with smallest length, define $G_2$ to be the subgraph of $G_1$ induced by the selected vertices, and so on. In each iteration we decrease the cardinality of the set of vertices by a factor of at least two. If the cycle cover $C_i$ consists of a single cycle for some $i=0,\dots, \log_2 m -1$, we consider a subgraph of $G$ that is the union of all the $C_k$ for $k=0,\dots,i$. This is an Eulerian subgraph of $G$ of cost at most $\log_2 m\cdot |OPT|$, and  can be transformed into a feasible Hamiltonian cycle by the standard procedure of short-cutting.

  Otherwise, the graph $G_{\log_2 m}$ consists of more than one vertex. Let $C=\cup_{i=1}^{\log_2m -1}C_i$ be the subgraph of $G$ that is the union of $\log_2 m$ cycle covers $C_k$ for $k=0,\dots, \log_2 m -1$. Consider the subgraph $G'$ of $G$ that is the union of $C$ and an arbitrary Hamiltonian cycle $H'$ in $G_{\log_2 m}$. Each connected component of $C$ consists of at least $m$ vertices. The vertex set of $G_{\log_2 m}$ consists of vertices corresponding to the shortest jobs in each of the connected components of $C$. Let $S$ be this set of jobs. We now claim that the length of $H'$ is at most
  \[\sum_{j\in S} L(j)\le \frac{1}{m}\sum_{j=1}^nL(j)\le \frac{1}{m}m\cdot C^*_{max}=C^*_{max}.\]
The last inequality follows from the fact that sum of processing times of all operations that must be processed on a single machine is a lower bound on the value of the optimal makespan. It follows that the total weight of $G'$ is at most $\log_2m+1$ times the optimal makespan. By shortcutting we can construct a Hamiltonian cycle in $G$, which in turn gives us an approximate solution for the original instance $I$ of \nwfs.\qed
\end{proof}

\section{Acknowledgments}
The first author would like to thank DIMAP, and in particular Artur Czumaj, for making his visit to the University of Warwick possible.
\bibliographystyle{splncs03}
\bibliography{no-wait}

\end{document}